\newtheorem{theorem}{Theorem}
\newtheorem{lemma}{Lemma}
\newtheorem{corollary}{Corollary}
\theoremstyle{definition}
\newtheorem{algorithm}{Algorithm}
\newtheorem{definition}{Definition}
\newtheorem{remark}{Remark}
\newcommand{\Tr}{\text{Tr}}
\title{An improved Quantum Max Cut approximation\\ via Maximum Matching}
\date{}
\author[1]{Eunou Lee\thanks{\href{mailto:eunoulee@kias.re.kr}{eunoulee@kias.re.kr}}}
\author[2]{Ojas Parekh\thanks{\href{mailto:odparek@sandia.gov}{odparek@sandia.gov}}}
\affil[1]{Korea Institute for Advanced Study, Seoul, South Korea }
\affil[2]{Sandia National Laboratories, Albuquerque, USA}
\begin{document}

\maketitle

\begin{abstract}
    Finding a high (or low) energy state of a given quantum Hamiltonian is a potential area to gain a provable and practical quantum advantage.
    A line of recent studies focuses on Quantum Max Cut, where one is asked to find a high energy state of a given antiferromagnetic Heisenberg Hamiltonian.
    In this work, we present a classical approximation algorithm for Quantum Max Cut that achieves an approximation ratio of 0.595, outperforming the previous best algorithms of Lee \cite{Lee22} (0.562, generic input graph) and King \cite{King22} (0.582, triangle-free input graph).
    The algorithm is based on finding the maximum weighted matching of an input graph and outputs a product of at most 2-qubit states, which is simpler than the fully entangled output states of the previous best algorithms. 
\end{abstract}

\bigskip
\section*{Introduction}
A quantum optimization problem seeks to compute the maximum (or minimum) of a function that is defined over the $n$-qubit Hilbert space.
In a restricted case where the function is a sum of $k$-qubit Hamiltonians, it is well known that the problems are in general QMA-hard, i.e. hard to solve to an inverse polynomial precision even with a quantum computer \cite{GHLS15}.
One way to cope with the computational hardness is to try to find good approximate solutions.

Quantum Max Cut (QMC) has served as a benchmark problem to develop ideas for quantum Hamiltonian approximation.
It has a simple definition, has a good physical motivation, namely the antiferromagnetic Heisenberg model, and extends the well-studied classical Max Cut problem.
The task of QMC, given a positive weighted graph $G = (V,E,w),$ is to find a (description of a) maximum energy state for the Hamiltonian
\begin{align*}
    H = \sum_{(i,j) \in E} w_{ij} (I - X_i X_j - Y_i Y_j - Z_i Z_j)/4,
\end{align*}
where $X_i$ is the Pauli matrix $X$ on qubit $i$ and identity on the rest. The matrices, $Y_i, Z_i$ are similarly defined.

Most of the existing approximation algorithms for QMC follow the framework of the seminal Goemans-Williamson algorithm \cite{GW95}. The problem is first relaxed to a semidefinite program (SDP), then the SDP is solved in polynomial time, and finally, the SDP solution is rounded to a feasible solution of the original problem.
Since QMC is a maximization problem over the $n$-qubit Hilbert space, the rounded solution should be an $n$-qubit quantum state instead of an $n$-bit string as for Max Cut. 
Assuming that we follow the Goemans-Williamson framework to approximate QMC, there are still three design choices: 1) which SDP to relax the original problem to, 2) which subset of quantum states (ansatz) to round the SDP solution to, and 3) how to round the SDP solution to an ansatz state.  For Max Cut, optimal choices (up to the Unique Games Conjecture) of classical analogues of the above are known; however, these remain unsettled for QMC.  

The Quantum Lasserre SDP hierarchy \cite{L02, ANP08} is a sequence of SDPs that upper bounds the maximum energy of a given quantum Hamiltonian. The Quantum Lasserre hierarchy does so by optimizing over pseudo-density operators that are not guaranteed to be positive. The level-$k$ Lasserre SDP includes all valid linear constraints on moments of subsets of at most $k$ qubits. It also includes global constraints characterized by polynomials where each term is tensor product of at most $k$ non-trivial single-qubit Paulis (see e.g.,~\cite{PT22}). 
Hence the SDPs in the hierarchy become tighter as the level increases, eventually converging to the given quantum Hamiltonian problem when $k = n$. The following is a way to view the SDP construction. Fix a quantum state $|\phi\rangle$. 
For an $n$-qubit Pauli matrix $P,$ define $v(P):= P|\phi\rangle.$
Then the energy of $|\phi\rangle$ for an arbitrary Hamiltonian can be expressed as a sum of inner products of these vectors. For example, $\langle \phi |I - X_i X_j - Y_i Y_j - Z_i Z_j  | \phi\rangle = \langle v(I), v(I)\rangle - \langle v(X_i X_j), v(I)\rangle -\langle v(Y_i Y_j), v(I)\rangle -\langle v(Z_i Z_j), v(I)\rangle $.
Additionally, it holds that $\|v(P) \| = 1$, and $\langle v(P_1), v(Q_1)\rangle = \langle v(P_2), v(Q_2)\rangle$ for all $n$-qubit Pauli matrices $P_1, P_2,Q_1, Q_2$ such that $P_1 Q_1 = P_2 Q_2.$
Now forget about $|\phi\rangle$ and maximize the energy expression in terms of $v(P)$'s, while satisfying the inner product relations.

In the following definition, $\mathcal P_k(n)$ is the set of $n$-qubit Pauli matrices with non-trivial terms on up to $k$ qubits.
\begin{definition}[Level-$k$ Quantum Lasserre SDP]
\begin{align*}
    \text{Maximize } \sum_{(i,j) \in E} w_{ij} v(I)\cdot(v(I) - v(X_i X_j) - v(Y_i Y_j) -v(Z_i Z_j))/4 \label{eq:sdp}\tag{S}
\end{align*}
    \begin{align*}
    \text{subject to } 
    & v(P) \in \mathbb R^{|\mathcal P_k(n)|}, & \forall P \in \mathcal P_k(n),    \\
    & v(P)\cdot v(P) = 1, & \forall P\in \mathcal P_k(n), \\
    & v(P_1)\cdot v(Q_1) = v(P_2)\cdot v(Q_2), & \forall P_1, P_2,Q_1, Q_2 \in \mathcal P_k(n) \text{ s.t. }P_1Q_1 = P_2Q_2, \\
    & v(P)\cdot v(Q) = 0, & \forall P,Q \in \mathcal P_k(n)\text{ s.t. } PQ+QP = 0.
\end{align*}
\end{definition}
Every existing QMC approximation algorithm that follows an SDP rounding framework uses a Quantum Lasserre SDP.
The QMC approximation algorithm of Gharibian-Parekh \cite{GP19} employs the level-1 Lasserre, and Parekh-Thompson \cite{PT21-lev2}, Parekh-Thompson \cite{PT22}, Lee \cite{Lee22}, and King \cite{King22} employ the level-2 Lasserre SDP.
More sophisticated SDP hierarchies that are aware of the $SU(2)$ symmetry in the QMC Hamiltonian have also been developed recently \cite{KPTTZ23,CEHKW23}, and such hierarchies are implicitly used in existing QMC approximation algorithms~\cite{PT22}.

Once an SDP relaxation is solved, its solution is rounded to a quantum state.
Current algorithms round SDP solutions to a proper subset (ansatz) of the $n$-qubit Hilbert space.
Gharibian-Parekh \cite{GP19} and Parekh-Thompson \cite{PT22} round to product states, and Parekh-Thompson \cite{PT21-lev2} rounds to a product of 1- and 2-qubit states, inspired by the non-SDP approximation algorithm of Anshu, Gosset, and Morenz~\cite{AGM20}. Lee \cite{Lee22} and King \cite{King22} round to $n$-qubit entangled states.

In this paper, we introduce a simple classical approximation algorithm that solves the level-2 Lasserre SDP and rounds to either a product of 1-qubit states or a product of 1- and 2-qubit states. The former is obtained using the Gharibian-Parekh product state rounding algorithm. The latter is obtained by solving the Maximum Weight Matching problem in the weighted input graph on which the QMC Hamiltonian is defined, and this does not depend on the SDP solution at all. This distinguishes and drastically simplifies our algorithm relative to previous SDP rounding approaches, which crucially use information from a level-2 SDP solution to produce entangled states.

We show that the approximation ratio of our algorithm is 0.595, which improves the previous best algorithms for general graphs (Lee \cite{Lee22} with a ratio of 0.562), and triangle-free graphs (King \cite{King22} with a ratio of 0.582). 

\paragraph{Quantum optimization.}
Another issue that we are concerned with is the definition of quantum optimization problems. 
A common way to define a quantum optimization problem in the literature is to define an energy function on $n$ qubits and then seek a maximum-energy state with respect to the function. What does it mean for a classical algorithm to solve this problem? If we restrict classical algorithms to outputting basis states, then the above quantum optimization reduces to a classical one.  A more relaxed and common approach is to only ask for a ``description'' of a quantum state.  This definition accommodates a broader family of algorithms that output a description of an entangled state, such as our and other previous algorithms for approximating QMC.
Now the issue is that the meaning of the word ``description'' is vague:
an output state of any quantum algorithm has a classical description, namely the quantum algorithm itself written on paper.
Therefore any quantum algorithm is a classical algorithm if we accept this definition.
We propose the following definition for a more satisfying notion of a classical algorithm solving a quantum optimization problem.
\begin{definition}
    Given an objective function $f$ that maps an $n$-qubit state to a real number, a pair of polynomial time quantum or classical algorithms $(P,V)$ maximizes $f$ to a value $\nu$ if the following conditions hold:
\begin{enumerate}
    \item 
    \begin{enumerate}
        \item  If $\text{max }f \ge \nu$:  $\exists$ $|w\rangle$ of size polynomial in $n$ such that $V(|w\rangle) =1$ w.p. $\ge 2/3$,
        \item  If $\text{max }f \le \nu - 1/p(n)$ for some polynomial $p$:  $\forall$ polynomial-sized $|w\rangle$, $V(|w\rangle) =1$ w.p. $\le 1/3$ 
    \end{enumerate}
    \item $P$ outputs $|w'\rangle$ such that $V(|w'\rangle)=1 $ w.p. $\ge 2/3.$
\end{enumerate}
\end{definition}
Each of $P$ and $V$ can be either classical or quantum; when $P$ or $V$ is classical, it is assumed that classical states are employed. Therefore according to our definition, we can have a cc-, qc-, cq-, or qq-optimization algorithm for a quantum optimization problem depending on whether each of $P$ and $V$ is classical or quantum.
Only cc-optimization algorithms for QMC are known to us so far.

We argue that finding an optimal qq-optimization algorithm for QMC is a viable path to achieve a \emph{provable and practical quantum advantage}.
Suppose the verifier algorithm $V$ is fixed to be classical. 
Then, from the prover $P$'s side, the task is to find a bit string to convince $V$ that there is a high energy state.
Assuming there is a tight NP-hard upper bound for a classical approximation for QMC (for example by the PCP theorem),
we cannot find a quantum prover that provably gives a greater ratio than all classical provers unless we prove that NP is a proper subset of QMA. 
However, to the best of our knowledge, there are no unexpected complexity theoretic consequences of a qq-approximation having a greater ratio than all cc-approximations. 
Moreover, we already know how to upper bound classical approximation ratios in some cases via the PCP theorem and the Unique Games Conjecture, so we can hope to upper bound the classical ratio for QMC.
The current best upper bound for a cc-approximation of QMC is 0.956 up to plausible conjectures~\cite{HNPTW21}.

\section*{Approximation algorithm}
In all previous algorithms outputting entangled states, the following monogamy of entanglement relation on stars is used crucially. 
\begin{definition}[SDP solution values]
  Let $G= (V= [n],E,w)$ be a weighted graph and let $(v(P))_{P \in \mathcal P_2(n)}$ be a feasible solution to the level-2 Lasserre SDP. Define, for $i,j \in V$, 
    \begin{gather*}
         g_{ij} := \frac{1}{4}v(I)\cdot(v(I) - v(X_i X_j) - v(Y_i Y_j) -v(Z_i Z_j))\\
         h_{ij} := \frac{1}{4}v(I)\cdot(v(I) - v(X_i X_j) - v(Y_i Y_j) -v(Z_i Z_j)) - \frac{1}{2}.
    \end{gather*}
  For  $x \in \mathbb R$, denote $x^+:= \text{max}(x,0)$.  In particular for $i,j \in V$,
    \begin{align*}
        h^+_{ij} := \max(h_{ij}, 0).
    \end{align*}
   The objective function value of the SDP solution is then $\nu := \sum_{(i,j) \in E} w_{ij} g_{ij}$.
\end{definition}

\begin{lemma}[Monogamy of entanglement on a star, \cite{PT21-lev2}] \label{lem:star_monogamy} Given a feasible solution to the level-2 Lasserre SDP on a graph $G=(V,E)$, for any vertex $i \in V$ and any $S\subseteq V$,
\begin{align*}
    \sum_{j \in S} h_{ij} \le \frac{1}{2}.
\end{align*}
In particular,
\begin{align}
    \sum_{j \in N(i)} h^+_{ij} \le \frac{1}{2},  \label{eq:hij_monogamy}
\end{align}
where $N(i)= \{j| (i,j) \in E\}.$
\end{lemma}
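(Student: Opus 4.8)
The plan is to recast the statement as a lower bound on a pseudo-expectation, prove it with one exact operator identity coming from the $su(2)$ algebra, and then transfer it to the SDP using two elementary level-$2$ facts: nonnegativity of the pseudo-variance, and the fact that an idempotent $2$-local operator has nonnegative pseudo-expectation. Throughout I write $\tilde{\mathbb E}[P]:=v(I)\cdot v(P)$ and extend $v$ (hence $\tilde{\mathbb E}$) linearly to Pauli polynomials; the consistency constraints then guarantee that $\tilde{\mathbb E}[C^2]=\|v(C)\|^2\ge 0$ for any Hermitian $2$-local $C$, and that $\tilde{\mathbb E}$ is a single well-defined linear functional on the span of products of two $2$-local Paulis.

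First I would unwind the definition: directly from the formula for $h_{ij}$ one has $h_{ij}=-\tfrac14\big(1+\tilde{\mathbb E}[X_iX_j+Y_iY_j+Z_iZ_j]\big)$. Writing $\sigma_i\cdot\sigma_j:=X_iX_j+Y_iY_j+Z_iZ_j$, $A_X:=\sum_{j\in S}X_j$ (and $A_Y,A_Z$ similarly), $A^2:=A_X^2+A_Y^2+A_Z^2$, and $Q:=\sum_{j\in S}\sigma_i\cdot\sigma_j=X_iA_X+Y_iA_Y+Z_iA_Z$, the claim $\sum_{j\in S}h_{ij}\le\tfrac12$ becomes exactly $\tilde{\mathbb E}[Q]\ge-(|S|+2)$. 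The key algebraic step is the operator identity $(Q+I)^2=A^2+I$: expanding $Q^2=\sum_{\alpha,\beta}(\text{Pauli on }i)(\text{Pauli on }i)\,A_\alpha A_\beta$ and using $X_i^2=I$ together with the commutators $[A_X,A_Y]=2iA_Z$ (and cyclic) gives $Q^2=A^2-2Q$, whence $(Q+I)^2=A^2+I$. (This is the same computation that shows the least eigenvalue of $Q$ is $-(|S|+2)$ for genuine states, via $(Q+I)^2=A^2+I\preceq|S|(|S|+2)I+I=(|S|+1)^2I$.)

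Now I transfer this to the relaxation. Since $Q+I$ is Hermitian and $2$-local, $\tilde{\mathbb E}[(Q+I)^2]=\|v(Q+I)\|^2$; and because $(Q+I)^2=A^2+I$ as operators and $\tilde{\mathbb E}$ is well-defined on products of two $2$-local Paulis, $\|v(Q+I)\|^2=\tilde{\mathbb E}[A^2]+1$. Setting $\mu:=\tilde{\mathbb E}[Q+I]$, nonnegativity of the pseudo-variance gives $0\le\|v(Q+I-\mu I)\|^2=\tilde{\mathbb E}[(Q+I)^2]-\mu^2$, so $\mu^2\le\tilde{\mathbb E}[A^2]+1$. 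It remains to bound the neighbours' total spin, $\tilde{\mathbb E}[A^2]\le|S|(|S|+2)$. Here I would use that for every pair $j,j'$ the singlet projector $\Pi^-_{jj'}=\tfrac14(I-\sigma_j\cdot\sigma_{j'})$ is an idempotent $2$-local operator, so $\tilde{\mathbb E}[\Pi^-_{jj'}]=\tilde{\mathbb E}[(\Pi^-_{jj'})^2]=\|v(\Pi^-_{jj'})\|^2\ge0$, i.e. $\tilde{\mathbb E}[\sigma_j\cdot\sigma_{j'}]\le 1$; since $A^2=3|S|\,I+\sum_{j\ne j'}\sigma_j\cdot\sigma_{j'}$, summing over the $|S|(|S|-1)$ ordered pairs yields $\tilde{\mathbb E}[A^2]\le 3|S|+|S|(|S|-1)=|S|(|S|+2)$. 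Combining, $\mu^2\le(|S|+1)^2$, hence $\mu\ge-(|S|+1)$ and $\tilde{\mathbb E}[Q]=\mu-1\ge-(|S|+2)$, which is the claim. The in-particular statement \eqref{eq:hij_monogamy} then follows by applying the general bound to $S=\{j\in N(i):h_{ij}>0\}$.

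The step I expect to be the main obstacle is justifying the middle transfer rigorously, namely that the degree-$4$ operator identity $(Q+I)^2=A^2+I$ is genuinely respected by $\tilde{\mathbb E}$ at level $2$. This is precisely where the level-$2$ structure is essential: every monomial of $(Q+I)^2$ is a product of two $2$-local Paulis, so its pseudo-moment is pinned down by the consistency constraints, and linearity lets these recombine into $\tilde{\mathbb E}[A^2]+1$; I would verify carefully that the apparently $3$- and $4$-local contributions cancel (they must, since the identity holds as operators) and that the idempotency argument for $\tilde{\mathbb E}[\Pi^-_{jj'}]\ge0$ likewise only ever invokes products of two $2$-local operators. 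Everything else—the commutator computation and the pair count—is routine.
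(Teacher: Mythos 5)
The paper does not prove this lemma at all---it is imported verbatim from \cite{PT21-lev2}, and the only argument supplied in the text is the one-line deduction of \eqref{eq:hij_monogamy} from the general bound by taking $S=\{j \in N(i) : h_{ij}>0\}$, which you reproduce correctly at the end. Your proposal is therefore a reconstruction of the cited proof rather than an alternative to anything in this paper, and as far as I can check it is correct and is essentially the standard level-2 sum-of-squares certificate: the reduction to $\tilde{\mathbb E}[Q]\ge -(|S|+2)$, the identity $(Q+I)^2=A^2+I$ (valid since $i\notin S$; the case $i\in S$ is harmless because $h_{ii}=-1$ only lowers the sum), the pairwise bound $\tilde{\mathbb E}[\sigma_j\cdot\sigma_{j'}]\le 1$ from idempotence of the singlet projector, and Cauchy--Schwarz against $v(I)$ all go through. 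The obstacle you flag---whether the degree-$4$ identity survives at level $2$---resolves exactly as you predict: expanding
\begin{equation*}
\|v(Q+I)\|^2=\sum_{j,j'\in S}\sum_{\alpha,\beta}v(\sigma_i^\alpha\sigma_j^\alpha)\cdot v(\sigma_i^\beta\sigma_{j'}^\beta)+2\,v(I)\cdot v(Q)+1,
\end{equation*}
the only terms whose operator product is more than $2$-local are those with $j\ne j'$ and $\alpha\ne\beta$, and these are precisely anticommuting pairs, so they are set to zero by the last SDP constraint; the surviving terms recombine to $\tilde{\mathbb E}[A^2]-2\tilde{\mathbb E}[Q]$, giving $\|v(Q+I)\|^2=\tilde{\mathbb E}[A^2]+1$ as you claim. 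The one genuine caveat is that the cross terms with $j=j'$, $\alpha\ne\beta$ require the \emph{signed} consistency constraint $v(X_iX_j)\cdot v(Y_iY_j)=-\,v(I)\cdot v(Z_iZ_j)$, which the SDP as transcribed in this paper does not literally state (its product constraint only equates pairs with identical signed products); this is a presentational gap of the paper's SDP definition rather than of your argument, and the formulation in \cite{PT21-lev2} that the lemma actually relies on does track these signs. Subject to that standing convention, your proof is complete.
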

The last statement is obtained by taking the set of edges incident to $i$ with positive $h_{ij}$ values as $S$. 

A  matching in $G$ corresponds naturally to a state that earns maximal energy on the Hamiltonian terms corresponding to matched edges. Notice that by Equation~\eqref{eq:hij_monogamy}, $(2h^+_e)_{e \in E}$ forms a fractional matching in the sense that if all these values were either 0 or 1, then we would have a matching that would in turn yield a state.  We can round this fractional matching to a true matching, with a loss in objective value. Since a maximum weight matching has weight at least that of any matching we might round to, we may simply use it instead.  
We solve the Maximum Weight Matching problem and assign the best 2-qubit state, namely the singlet, $(|01\rangle - |10\rangle)/\sqrt{2}$, to matched edges.
To each unmatched qubit, we assign the maximally mixed 1-qubit state.
The resulting $n$-qubit state has maximal energy on matched edges but low energy on edges that are not matched. To address this issue, we also run the product state algorithm of Gharibian-Parekh \cite{GP19}. Below is the complete algorithm.
\begin{algorithm}[Approximation algorithm for Quantum Max Cut] \label{alg}
Given a weighted graph $G = (V,E,w)$ as an input,
\begin{enumerate}
    \item \label{item:product_state} Find a product state as follows:
    \begin{enumerate}
        \item \label{item:solve_SDP} Solve SDP \eqref{eq:sdp} for $k=2$ to get solution vectors $(v(P))_{P\in \mathcal P_2(n)}$.
        \item Sample a random matrix $R$ with dimension $3\times 3 |\mathcal P_2(n)|$ with each element independently drawn from $\mathcal N(0,1)$.
        \item Perform Gharibian-Parekh rounding on each $v_i : = (v(X_i) \| v(Y_i) \| v(Z_i)) /\sqrt 3$ to get
        $u_i:= Rv_i/\|Rv_i\|$.
        \item Let $\rho_1:= \prod_{i\in V}\frac{1}{2}( I + u_{i,1} X_i +u_{i,2} Y_i +u_{i,3} Z_i ) $. 
    \end{enumerate}
    \item \label{item:matching} Find a matching state as follows:
    \begin{enumerate}
        \item Find the maximum weight matching $M:E\rightarrow \{0,1\}$ of $G$, for example via Edmonds Algorithm \cite{Ed65}.
        \item Let $\rho_2 := \prod_{(i,j):  M_{(i,j)} =1}(I -X_i X_j - Y_i Y_j - Z_i Z_j)/4 \prod_{i \in U} I/2$, where $U$ is the set of qubits unmatched by $M$.
    \end{enumerate}
    \item Output whichever of $\rho_1$ and $\rho_2$ that has greater energy.
\end{enumerate}
\end{algorithm}
Using matchings to find a good QMC state is not a new idea. 
Anshu, Gosset, and Morenz \cite{AGM20} introduced the idea of using matchings for QMC approximations, proving that there exists a product of 1- and 2-qubit states with energy at least 0.55 times the maximum QMC energy.
Parekh and Thompson use a level-2 Lasserre solution to identify~\cite{PT21-lev2,PT22} a subgraph on which they find a maximum weight matching; they then output a product state or a product of 1- and 2-qubit states akin to our algorithm, yielding an approximation ratio of 0.533.
To obtain our improvement, we relate a level-2 SDP solution to the value of a matching on the whole input graph, whereas Parekh-Thompson do so for a proper subgraph of the input graph that is obtained from the level-2 SDP solution.    

Our algorithm is much simpler than previous algorithms producing entangled states. It may be surprising that we can establish that this algorithm offers a better approximation guarantee than previous algorithms, including those outputting states with potential global entanglement.  In particular, our algorithm does not even need to solve an SDP to obtain the entangled solution $\rho_2$.  We only use the level-2 Lasserre SDP to argue that a maximum weight matching provides a solution that has reasonably high energy when $\rho_1$ has low energy.  This manifests itself when obtaining the product state $\rho_1$, which must be done so with respect to the level-2 SDP.  Even though $\rho_1$ requires solving the level-2 SDP, it is obtained by only using the vectors, $v(X_i), v(Y_i), v(Z_i)$, corresponding to single-qubit Paulis (i.e., the level-1 part of a level-2 solution). In fact, our algorithm is well defined if we solve the level-1 SDP relaxation instead of the level-2 SDP in Step \eqref{item:solve_SDP}; however, we do not expect approximation factors beyond 0.498 using only the level-1 SDP~\cite{HNPTW21}.    

\paragraph{Strengthening monogamy of entanglement on a star.}
All previous approximation algorithms for QMC outputting entangled states critically rely on monogamy of entanglement on a star (Lemma~\ref{lem:star_monogamy}). The previously best-known algorithms by Lee \cite{Lee22} and King \cite{King22} both start with a good product state and evolve it to an entangled state while respecting these monogamy of entanglement constraints at each vertex.
The key difference in our case is that we directly find a global solution satisfying the monogamy of entanglement constraints. If our algorithm is the optimal way to exploit these constraints, stronger inequalities obtained from the SDP are necessary to deliver a QMC approximation algorithm with a meaningful improvement in approximation ratio.

Even though Monogamy of Entanglement is tight when $h_{ij} = 1/|N(i)|$ for all $j \in N(i)$, it is easy to see that the inequality is far from tight at other points. 
Suppose $h_{ij}= 1/2$, with $(i,j)$ being maximally entangled. Then on a neighbouring edge $(i,k)$, the energy is 1/4 and $h_{ik} = -1/4$.
In this case, the deviation from the upper bound grows linearly as the number of connected edges grows. Parekh and Thompson derived nonlinear monogamy of entanglement inequalities on a triangle to address this issue in obtaining an optimal approximation for QMC using product states.  Their result is captured in Lemma~\ref{lem:triangle_monogamy}.  

\begin{lemma}[Monogamy of entanglement on a triangle, Lemma 1 of \cite{PT22}]  \label{lem:triangle_monogamy}
  Given a feasible solution to the level-2 Lasserre SDP on a graph $G=(V,E)$, for $i,j,k \in V$,
  \begin{gather}
    0 \leq g_{ij} + g_{jk} + g_{ik} \leq \frac{3}{2} \label{eq:triangle_linear_monogamy} \\
    g_{ij}^2 + g_{jk}^2 + g_{ik}^2 \leq 2g_{ij} g_{jk} + 2g_{ij}g_{ik} + 2g_{jk}g_{ik}.  \label{eq:triangle_quadratic_monogamy}
  \end{gather}
\end{lemma}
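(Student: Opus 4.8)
The plan is to reinterpret the SDP quantity $g_{ij}$ operator-theoretically and then push both inequalities through the positive-semidefiniteness that the level-2 moment matrix guarantees on the three qubits $\{i,j,k\}$. The starting observation is that, since $\mathrm{SWAP}_{ij}=\tfrac12(I+X_iX_j+Y_iY_j+Z_iZ_j)$, the operator $\Pi_{ij}:=\tfrac14(I-X_iX_j-Y_iY_j-Z_iZ_j)$ is exactly the projector onto the two-qubit singlet, so $g_{ij}=v(I)\cdot v(\Pi_{ij})$ is the pseudo-expectation of $\Pi_{ij}$. Because $\Pi_{ij}$ is a $2$-local idempotent, a short expansion using the SDP moment relations gives $\|v(\Pi_{ij})\|^2=v(I)\cdot v(\Pi_{ij})=g_{ij}$, which immediately yields $g_{ij}\ge 0$ (hence the left inequality $0\le g_{ij}+g_{jk}+g_{ik}$) and, by Cauchy--Schwarz against the unit vector $v(I)$, also $g_{ij}\le 1$.

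For the upper bound $g_{ij}+g_{jk}+g_{ik}\le\tfrac32$ I would first record the operator identity that on $(\mathbb{C}^2)^{\otimes 3}$ the sum $\Pi_{ij}+\Pi_{jk}+\Pi_{ik}=\tfrac32\,\Pi_{1/2}$, where $\Pi_{1/2}$ projects onto the spin-$1/2$ (the $S_3$-standard) isotypic component; its spectrum is $\{0,\tfrac32\}$, so $0\preceq\Pi_{ij}+\Pi_{jk}+\Pi_{ik}\preceq\tfrac32 I$ as operators. The content is the right inequality, which I would establish by writing $\tfrac32 I-(\Pi_{ij}+\Pi_{jk}+\Pi_{ik})$ as a sum of squares of operators drawn from the span of $2$-local Paulis on $\{i,j,k\}$, so that the level-2 SDP forces its pseudo-expectation to be nonnegative. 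Producing this explicit $2$-local SOS certificate (equivalently, the matching dual solution of the small SDP) is the first place where genuine work is needed, because the naive symmetrizer representation of $\Pi_{1/2}$ involves $3$-cycles and is not manifestly $2$-local.

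For the quadratic inequality I would use the Gram matrix of the four vectors $v(I),v(\Pi_{ij}),v(\Pi_{jk}),v(\Pi_{ik})$, which is positive semidefinite as a genuine Gram matrix. Its diagonal entries are $1,g_{ij},g_{jk},g_{ik}$; the entries against $v(I)$ are again $g_{ij},g_{jk},g_{ik}$; and the remaining off-diagonal entries are the pseudo-expectations $v(I)\cdot v(\Pi_{ab}\Pi_{cd})$ of products of two singlet projectors sharing a qubit. Such a product expands into $3$-local Paulis, but each summand is a product of two $2$-local Paulis, hence an entry of the level-2 moment matrix, so every Gram entry is legitimately available at level 2. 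The only genuinely free quantity is the three-qubit chirality moment $\langle\,\vec\sigma_i\cdot(\vec\sigma_j\times\vec\sigma_k)\,\rangle$. Equivalently, after averaging the solution over the global $SU(2)$ symmetry of the Heisenberg interaction (which preserves feasibility and each invariant $g_{ij}$), the whole three-qubit moment matrix is parametrized by the isotropic correlations $t_{ab}$ with $g_{ab}=\tfrac14(1-3t_{ab})$ together with a single chirality parameter $\mu$.

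The crux is then to show that requiring this parametrized matrix to be positive semidefinite, after optimizing out (completing the square in) the unconstrained chirality $\mu$, is exactly equivalent to $g_{ij}^2+g_{jk}^2+g_{ik}^2\le 2g_{ij}g_{jk}+2g_{ij}g_{ik}+2g_{jk}g_{ik}$. I expect $\mu$ to enter the relevant principal minor quadratically with a fixed sign, so that a Schur-complement/discriminant computation collapses the PSD condition to precisely the stated quadratic form; a useful sanity check is that this inequality is equivalent to $\sqrt{g_{ij}},\sqrt{g_{jk}},\sqrt{g_{ik}}$ obeying the triangle inequality, and that the configuration $g_{ij}=1,\,g_{jk}=g_{ik}=\tfrac14$ (one singlet with the third qubit decoupled) saturates it. The main obstacle, here and for the linear upper bound, is exactly that the cross terms/chirality are \emph{not} determined by the $g_{ij}$ alone: one must verify that positive-semidefiniteness of the level-2 matrix, after eliminating these extra moments, collapses to neither more nor less than the claimed inequalities, i.e. one must exhibit the correct small SOS/dual certificate rather than merely invoke the underlying operator inequality.
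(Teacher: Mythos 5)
The paper itself does not prove this lemma: it is imported verbatim (after a change of variables) from Lemma~1 of \cite{PT22}, so there is no in-paper argument to compare against. Judged on its own terms, your proposal correctly identifies the standard route --- view $\Pi_{ij}=\tfrac14(I-X_iX_j-Y_iY_j-Z_iZ_j)$ as the singlet projector, use $\|v(\Pi_{ij})\|^2=v(I)\cdot v(\Pi_{ij})=g_{ij}$ to get nonnegativity, and extract the remaining inequalities from positive semidefiniteness of the Gram matrix of $v(I),v(\Pi_{ij}),v(\Pi_{jk}),v(\Pi_{ik})$ --- but it stops short of a proof at exactly the two places you yourself flag as ``the crux'': no certificate for $g_{ij}+g_{jk}+g_{ik}\le\tfrac32$ is actually produced, and the claim that the PSD condition collapses to the quadratic inequality is only conjectured (``I expect $\mu$ to enter\dots''). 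As written this is a plan, not a proof.

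Two concrete points. First, the linear upper bound does not require the explicit $2$-local SOS decomposition of $\tfrac32 I-\sum\Pi$ that you say is ``where genuine work is needed'': since $\Pi_{ij}+\Pi_{jk}+\Pi_{ik}=\tfrac32\Pi_{1/2}$ is $\tfrac32$ times a projector, the level-2 constraints give $\|v(\Pi_{ij})+v(\Pi_{jk})+v(\Pi_{ik})\|^2=\tfrac32(g_{ij}+g_{jk}+g_{ik})$, and Cauchy--Schwarz against the unit vector $v(I)$ yields $(g_{ij}+g_{jk}+g_{ik})^2\le\tfrac32(g_{ij}+g_{jk}+g_{ik})$, hence the bound. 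Second, your premise that a ``genuinely free'' chirality moment $\mu$ must be eliminated by a Schur complement is not accurate for the SDP as defined here: the constraint $v(P)\cdot v(Q)=0$ for anticommuting $P,Q$ annihilates every cross term $v(\sigma_i^a\sigma_j^a)\cdot v(\sigma_j^b\sigma_k^b)$ with $a\ne b$, and the commuting terms reduce via the $P_1Q_1=P_2Q_2$ constraint to two-body moments; a short computation gives $v(\Pi_{ij})\cdot v(\Pi_{jk})=\tfrac14(g_{ij}+g_{jk}-g_{ik})$, so the entire $4\times4$ Gram matrix is an explicit function of $(g_{ij},g_{jk},g_{ik})$ with nothing left to optimize out. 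The genuinely missing step is therefore to verify that positive semidefiniteness of this explicit matrix implies $g_{ij}^2+g_{jk}^2+g_{ik}^2\le 2g_{ij}g_{jk}+2g_{ij}g_{ik}+2g_{jk}g_{ik}$ (note your $2\times2$ minors alone give only a weaker inequality, so the full matrix must be used); until that computation is carried out, the quadratic inequality remains unproven.
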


The presentation of the above lemma is equivalent to that of~\cite{PT22} after a change of variables. From these relations, we obtain a tighter bound on star graphs with 2 edges as stated in the lemma below.
We denote the relation ``convexgamy'' to distinguish it from the linear monogamy relation (Lemma \ref{lem:star_monogamy}), and the resulting relation gives a convex region.

\begin{figure}
  \centering
    \includegraphics[width=0.5\textwidth]{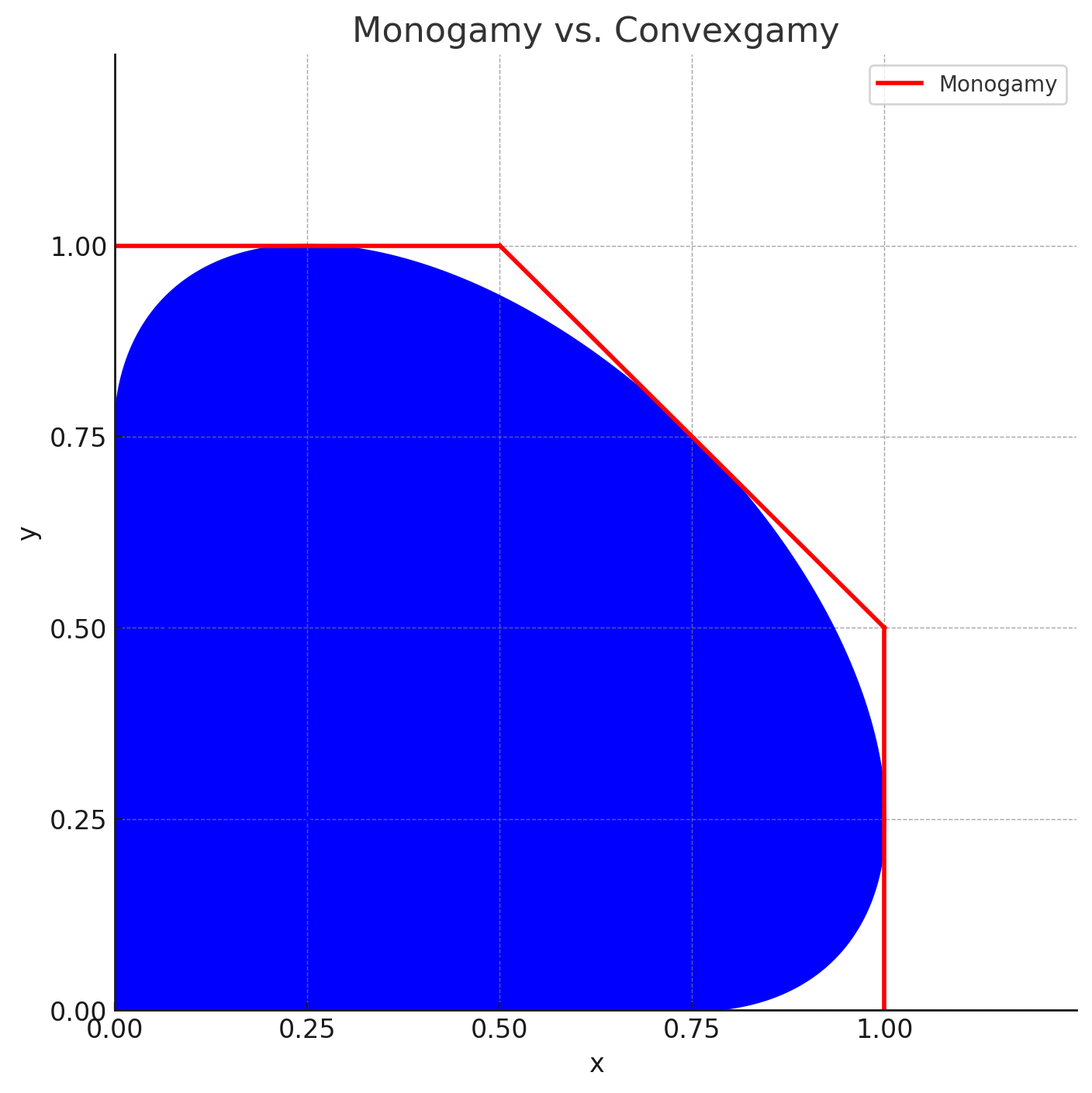}
  \caption{The solid region represents a feasible area characterized by entanglement convexgamy whereas the red line represents the boundary of the feasible region by monogamy of entanglement on a star.}
\label{fig:fig}
\end{figure}

\begin{lemma}[Entanglement convexgamy on 2 edges]
Consider a graph $G = (V= \{1,2,3\},E= \{(1,2),(2,3)\})$. Let $x,y $ be defined by a feasible level-2 Lasserre SDP solution as $x := g_{12}$ and $y := g_{23}$.  Then $(x, y)$ is confined in the region defined by the x-axis, y-axis, and the ellipse touching the x- and y-axis and $x+y=3/2$ as depicted in Figure \ref{fig:fig}. More specifically, the ellipse is defined as $3(x+y-1)^2 + (x-y)^2  = 3/4$. 
\end{lemma}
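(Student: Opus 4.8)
The plan is to treat the missing edge $(1,3)$ as a free auxiliary variable and reduce the claim to a one-dimensional feasibility question. Since $(1,3)\notin E$, the quantity $z := g_{13}$ does not appear in the SDP objective and is constrained only through the monogamy relations of Lemma~\ref{lem:triangle_monogamy} applied to $\{1,2,3\}$. For the lemma we need the containment direction: the true $z=g_{13}$ of a feasible solution exists and satisfies the triangle relations, so $(x,y,z)$ is an admissible triple and hence $(x,y)$ lies in the projection onto the first two coordinates of the admissible set; characterizing that projection is exactly the stated region. I would therefore use \eqref{eq:triangle_linear_monogamy}, \eqref{eq:triangle_quadratic_monogamy}, and the nonnegativity $x,y,z\ge 0$ (each $g_{ij}$ relaxes the nonnegative energy of a single antiferromagnetic term), and ask: for which $(x,y)$ with $x,y\ge 0$ does some $z$ satisfy both $x+y+z\le 3/2$ and $x^2+y^2+z^2\le 2(xy+yz+zx)$?

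The main computation is to read the quadratic relation \eqref{eq:triangle_quadratic_monogamy} as a quadratic inequality in $z$, namely $z^2 - 2(x+y)z + (x-y)^2 \le 0$. Because $x,y\ge 0$ its discriminant $16xy$ is nonnegative, so the admissible $z$ form the interval between the roots $(\sqrt{x}-\sqrt{y})^2$ and $(\sqrt{x}+\sqrt{y})^2$. Intersecting this interval with the half-line $z\le 3/2-(x+y)$ from \eqref{eq:triangle_linear_monogamy}, a compatible $z$ exists precisely when the smaller root does not exceed the linear upper bound, i.e. when $(\sqrt{x}-\sqrt{y})^2 \le 3/2-(x+y)$. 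The only other comparison, the lower root against the quadratic upper root $(\sqrt{x}-\sqrt{y})^2\le(\sqrt{x}+\sqrt{y})^2$, holds automatically, so it never binds and the linear bound alone is decisive.

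Simplifying $(\sqrt{x}-\sqrt{y})^2 \le 3/2-(x+y)$ yields $x+y-\sqrt{xy}\le 3/4$. To finish I would verify this is exactly an arc of the stated ellipse: writing $p=x+y$ and $q=\sqrt{xy}$ and using $(x-y)^2=p^2-4q^2$, substitution of the boundary $p-q=3/4$ into $3(p-1)^2+(x-y)^2$ collapses to the identity $3/4=3/4$, so the curve $x+y-\sqrt{xy}=3/4$ lies on $3(x+y-1)^2+(x-y)^2=3/4$. Since $x+y-\sqrt{xy}$ is convex (the geometric mean $\sqrt{xy}$ is concave on the positive quadrant), the feasible set is convex; it is bounded by the two axes and this arc, which is tangent to the axes at $(3/4,0)$ and $(0,3/4)$ and to the star-monogamy line $x+y=3/2$ at $(3/4,3/4)$. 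The complementary arc $x+y+\sqrt{xy}=3/4$ lies strictly interior and is \emph{not} part of the boundary.

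The step I expect to be the real obstacle is this projection bookkeeping: confirming that among the candidate constraints on $z$ (the two quadratic roots and the linear upper bound $3/2-(x+y)$) it is precisely the quadratic \emph{lower} root meeting the linear upper bound that produces the boundary, while every other combination stays slack. This is also where one must be careful that the claimed region is bounded by the \emph{far} arc only: the full ellipse interior is not the feasible set, since the origin (a product state) is feasible yet lies outside the ellipse. A minor but necessary point is establishing the individual nonnegativity $x,y,z\ge 0$ directly (the validity of the two-qubit reduced states in a level-$2$ solution gives $g_{ij}\in[0,1]$), rather than only the summed bound in \eqref{eq:triangle_linear_monogamy}.
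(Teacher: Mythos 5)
Your proposal is correct and rests on the same foundation as the paper's proof: both introduce the auxiliary quantity $z=g_{13}$ for the non-edge, invoke the triangle relations \eqref{eq:triangle_linear_monogamy} and \eqref{eq:triangle_quadratic_monogamy} of Lemma~\ref{lem:triangle_monogamy} together with $x,y,z\ge 0$, and then project out $z$. Where you differ is in how the projection is carried out. The paper argues geometrically: for fixed $c=x+y+z$, the quadratic relation forces $(x,y,z)$ into the inscribed disk of the triangle cut by the plane $x+y+z=c$ from the nonnegative orthant, which projects to the inellipse of $(0,0),(c,0),(0,c)$; the final ellipse equation is left as ``solving the inscription condition.'' You instead perform explicit quantifier elimination on $z$: viewing \eqref{eq:triangle_quadratic_monogamy} as $z^2-2(x+y)z+(x-y)^2\le 0$ with roots $(\sqrt{x}\pm\sqrt{y})^2$, a compatible $z$ exists precisely when $(\sqrt{x}-\sqrt{y})^2\le \tfrac{3}{2}-(x+y)$, i.e.\ $x+y-\sqrt{xy}\le \tfrac{3}{4}$, and you verify directly that this curve is the relevant arc of $3(x+y-1)^2+(x-y)^2=\tfrac{3}{4}$ (for the containment claimed in the lemma only the ``only if'' direction is needed, and it is immediate from $(\sqrt{x}-\sqrt{y})^2\le z\le \tfrac{3}{2}-(x+y)$). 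Your route buys an explicit derivation of the ellipse equation and a clean identification of which arc actually bounds the region --- the far arc, with the origin feasible but outside the ellipse --- which the paper's figure conveys but its proof does not spell out; the paper's route buys geometric intuition and avoids the root bookkeeping. One small point both treatments share: the nonnegativity $x,y,z\ge 0$ is asserted rather than derived from the SDP constraints (it holds because the level-2 Lasserre solution induces genuine two-qubit marginals, so each $g_{ij}$ is the expectation of the PSD operator $H_{ij}$); you at least flag this, which is worth keeping.
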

\begin{proof}
Let $z = g_{13}$. Then $x,y,z\ge 0.$
By Lemma~\ref{lem:triangle_monogamy}, $x^2 + y^2 + z^2 \le 2(xy + yz + zy),$ and $0\le x + y +z \le 3/2.$ The first inequality is equivalent to
\begin{align*}
    \frac{\sqrt{ x^2 + y^2 + z^2 }}{x + y + z} \le \frac{1}{\sqrt 2}.
\end{align*}
It means that if $x+ y +z =c \ge 0,$ then $\sqrt{ x^2 + y^2 + z^2 } \le c/\sqrt{2}$. So $(x,y,z)$ lies the intersection of the plane $x+y+z = c$ and the sphere of radius $c/\sqrt 2$ centered at the origin, which is the incircle of the triangle defined by $x+ y + z\ge c$ in the region $x,y,z\ge 0$. When the circle is projected to the $xy$-plane to give a feasible subset of $(x,y)$, we get the ellipse inscribed in the triangle defined by $(0,0), (c,0), (0,c).$ Because $0\le c \le 3/2$, we prove that a feasible point is in the region defined by the $x$-axis, $y$-axis, and the ellipse touching the $x$- and $y$-axis and $x+y = 3/2$.

The equation of the ellipse follows by solving the inscription condition.
\end{proof}

\bigskip
\section*{Analysis of the algorithm}
In the rest of the paper, we introduce necessary concepts regarding matching theory and bound the approximation ratio of our algorithm.

\begin{theorem}[Linear program for Maximum Weight Matching, \cite{Ed65}]  Given a weighted graph $G= (V, E, w)$, the following linear program gives the value of a maximum weight matching in $G$:
\begin{alignat}{3}
&\text{maximize} &\quad&  \sum_{e \in E} w_e x_e  \label{eq:matching_lp}\tag{M}\\
&\text{subject to} &&  \sum_{j \in N(i)} x_{ij} \leq 1, &\quad& \text{for all } i \in V, \label{eq:matching_vertex}\\ 
& && \sum_{e \in E(S)} x_e \leq \frac{|S|-1}{2}, && \text{for all } S \subseteq V: |S| \text{ odd},  \label{eq:matching_set}\\
& && x_e \geq 0, && \text{for all }e \in E.  \label{eq:matching_nonneg}
\end{alignat}
where $E(S) := \{ (i,j) \in E \mid i,j \in S \}$ for all $S \subseteq V$.
\end{theorem}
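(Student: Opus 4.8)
The plan is to prove two inequalities relating the optimum of \eqref{eq:matching_lp} to the maximum weight of a matching in $G$. The forward inequality is immediate: for any matching $M$, its incidence vector $x_e = M_e \in \{0,1\}$ is feasible. Indeed, \eqref{eq:matching_vertex} holds because a matching uses each vertex at most once, \eqref{eq:matching_set} holds because at most $(|S|-1)/2$ edges of a matching can lie inside an odd set $S$, and \eqref{eq:matching_nonneg} is trivial; hence the LP optimum is at least the maximum matching weight. All the work is in the reverse inequality, i.e. showing that the LP value is attained by an actual matching and does not overshoot.

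For the reverse direction I would prove that the feasible polytope of \eqref{eq:matching_lp} is integral, so that each of its vertices is the incidence vector of a matching; since the polytope is bounded, an optimal vertex is then a matching of weight equal to the LP value, closing the gap. I would argue integrality by induction on $|V|+|E|$, starting from an arbitrary vertex $x^*$. If $x^*_e = 0$ for some edge $e$, deleting $e$ and invoking the inductive hypothesis on $G-e$ makes $x^*$ integral; if $x^*_{uv}=1$ for some edge $(u,v)$, then \eqref{eq:matching_vertex} forces $x^*_{uj}=x^*_{vj}=0$ for all other neighbors $j$, and deleting $u,v$ reduces to a smaller graph. We may therefore assume $0 < x^*_e < 1$ for every edge and, passing to a component, that $G$ is connected. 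Because assigning $1/2$ to every edge of an odd cycle violates \eqref{eq:matching_set}, such a fully fractional vertex cannot arise from the vertex constraints alone; a rank count on the tight constraints defining $x^*$ then forces some odd-set constraint \eqref{eq:matching_set} to hold with equality for a proper set $S$ with $|S| \ge 3$ odd.

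The crux of the argument, and the step I expect to be the main obstacle, is the shrinking (``blossom'') operation applied to this tight set $S$: contract $S$ to a single vertex to form a smaller graph $G/S$, check that the contracted vector is feasible for $G/S$ (the constraints of $G/S$ descend from those of $G$), apply the inductive hypothesis to write it as a convex combination of matchings of $G/S$, and then lift each of these matchings back across $S$ — using that $x^*(E(S)) = (|S|-1)/2$ makes $S$ internally near-perfectly matched — to express $x^*$ itself as a convex combination of matchings of $G$, contradicting its being a non-integral vertex. The delicate points are verifying feasibility after contraction and, in the lifting, consistently matching the single vertex of $S$ left exposed by the contracted matching in each term of the combination.

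An alternative route, particularly natural here since Edmonds' algorithm \cite{Ed65} is already used in Algorithm~\ref{alg}, is primal--dual: the blossom algorithm maintains an integral matching together with a feasible solution to the dual of \eqref{eq:matching_lp} — vertex potentials $y_i \ge 0$ and odd-set potentials $z_S \ge 0$ — and terminates when complementary slackness holds, at which point the matching weight equals the dual objective. Weak duality then sandwiches the LP optimum between the two equal quantities, forcing all three to coincide. Under this route the main obstacle migrates to proving termination and correctness of the blossom search, with its alternating forests and repeated blossom contractions, which is precisely the technical heart of \cite{Ed65}.
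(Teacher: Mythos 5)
The paper does not prove this theorem at all: it is stated as a citation to Edmonds \cite{Ed65} and used as a black box (only Lemma~\ref{lem:matching_approx}, which scales a partially feasible point into the polytope, is proved in the paper). So there is no in-paper proof to compare against; your proposal has to be judged against the standard literature. Both routes you sketch are legitimate and are, respectively, the polyhedral proof (integrality of the matching polytope, as in Lov\'asz--Plummer or Schrijver) and Edmonds' original primal--dual proof via the blossom algorithm. The easy direction is handled correctly: a matching places at most $\lfloor |S|/2\rfloor = (|S|-1)/2$ edges inside an odd set $S$, so its incidence vector is feasible and the LP value upper-bounds the maximum matching weight.

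That said, what you have written is a roadmap rather than a proof, and the step you defer is genuinely the entire content of the theorem. In the polyhedral route, two points need real care. First, the rank-counting claim that a fractional vertex must make some odd-set constraint tight is right in spirit, but the clean way to argue it is that a vertex defined only by tight degree and nonnegativity constraints is half-integral with its half-edges forming disjoint odd cycles, and the vertex set of any such cycle then violates \eqref{eq:matching_set}; you should state it that way rather than gesture at a rank count. Second, in the shrinking step you must decompose \emph{two} pieces and glue them: the contracted vector on $G/S$ as a convex combination of matchings of $G/S$, \emph{and} the restriction of $x^*$ to $E(S)$ as a convex combination of near-perfect matchings of $G[S]$, where tightness $x^*(E(S))=(|S|-1)/2$ together with the degree constraints forces each vertex $v\in S$ to be the exposed vertex with frequency exactly $1-\sum_{j\in N(v)\setminus S}x^*_{vj}$; matching these frequencies against which vertex of $S$ the contracted matching enters through is the gluing that makes the convex combination of the pieces assemble into matchings of $G$. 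Without carrying out that bookkeeping (or, in the primal--dual route, without proving termination and the complementary-slackness invariants of the blossom search), the argument is not yet a proof. Since the paper simply cites \cite{Ed65}, citing a complete treatment would also be an acceptable resolution, but as a self-contained proof the proposal currently has a gap exactly where you predicted it would.
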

The above linear program (LP) cannot be efficiently solved directly since it has an exponential number of constraints; however, algorithms for Maximum Weight Matching, such as Edmonds Algorithm~\cite{Ed65}, obtain the optimal value in polynomial time using insights based on the LP and its dual.  We will need to show that if we are given a solution $(x)_{e \in E}$ that only satisfies \emph{some} of the constraints, then $(\alpha x)_{e \in E}$ satisfies \emph{all} of the constraints for some $\alpha \in (0,1)$.  This will allow us to relate the objective value of the level-2 SDP to the weight of an optimal matching.

\begin{lemma}  \label{lem:matching_approx}
If $(x)_{e \in E}$ satisfies constraints \eqref{eq:matching_vertex},
\eqref{eq:matching_nonneg},  and \eqref{eq:matching_set} for $|S|=3$, then $(\frac{4}{5}x)_{e \in E}$ is feasible for LP \eqref{eq:matching_lp}.
\end{lemma}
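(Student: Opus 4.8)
The plan is to verify each family of constraints of LP~\eqref{eq:matching_lp} for the scaled vector $(\tfrac45 x)_{e \in E}$ separately. Two of them are immediate: nonnegativity~\eqref{eq:matching_nonneg} is preserved under scaling by a positive constant, and the vertex constraints~\eqref{eq:matching_vertex} follow at once since $\sum_{j \in N(i)} \tfrac45 x_{ij} = \tfrac45 \sum_{j \in N(i)} x_{ij} \le \tfrac45 \le 1$. So the real content lies entirely in establishing the odd-set constraints~\eqref{eq:matching_set} for \emph{all} odd $|S|$, given that we are handed them only for $|S| = 3$.

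The key step I would use is that the vertex constraints alone yield a generic bound on $\sum_{e \in E(S)} x_e$. Summing~\eqref{eq:matching_vertex} over all $i \in S$ counts each edge of $E(S)$ twice (and each edge with exactly one endpoint in $S$ once), so
\[
  2\sum_{e \in E(S)} x_e \;\le\; \sum_{i \in S} \sum_{j \in N(i)} x_{ij} \;\le\; |S|,
\]
giving $\sum_{e \in E(S)} x_e \le |S|/2$. Scaling by $4/5$ then yields $\sum_{e \in E(S)} \tfrac45 x_e \le \tfrac{2|S|}{5}$, and a one-line comparison shows $\tfrac{2|S|}{5} \le \tfrac{|S|-1}{2}$ precisely when $|S| \ge 5$. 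Hence the vertex constraints, by themselves, already force~\eqref{eq:matching_set} for every odd $|S| \ge 5$.

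This leaves the small cases. The case $|S| = 1$ is vacuous since $E(S) = \emptyset$. The case $|S| = 3$ is where the generic counting bound $|S|/2 = 3/2$ is too weak, as scaling would give $\tfrac45 \cdot \tfrac32 = \tfrac65 > 1$; here I would invoke the hypothesized triangle constraint $\sum_{e \in E(S)} x_e \le 1$ directly, obtaining $\sum_{e \in E(S)} \tfrac45 x_e \le \tfrac45 \le 1 = \tfrac{|S|-1}{2}$.

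I do not expect a genuine obstacle: the argument is a counting bound followed by a threshold comparison. The subtle point worth highlighting is \emph{why} the constant $4/5$ is exactly right and cannot be improved by this reasoning. The binding case is $|S| = 5$, where $\tfrac{2|S|}{5} = \tfrac{|S|-1}{2}$ holds with equality, so any factor larger than $4/5$ would violate~\eqref{eq:matching_set} on some $5$-set; meanwhile the supplied triangle constraint is needed solely to handle $|S| = 3$, the one odd cardinality below $5$ with nonempty $E(S)$ for which the pure vertex-counting bound falls short.
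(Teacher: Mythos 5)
Your proof is correct and follows essentially the same route as the paper: summing the vertex constraints over $i \in S$ to get $\sum_{e \in E(S)} x_e \le |S|/2$, observing that scaling by $4/5$ then handles all odd $|S| \ge 5$ (with $|S|=5$ the tight case), and invoking the hypothesized triangle constraints for $|S|=3$. Your explicit treatment of the $|S|=1$ case and of why $4/5$ is optimal for this argument is a slightly more detailed write-up of the identical idea.
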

\begin{proof}Assume the hypothesis and consider the constraints of \eqref{eq:matching_set}.  We bound the value of the solution $x$ on each such constraint.  For $S \subseteq V$, define $\delta(S) := \{(i,j) \in E \mid |\{i,j\} \cap S| = 1\}$.  Then,
\begin{equation*}
2\sum_{e \in E(S)} x_e \leq 2\sum_{e \in E(S)} x_e + \sum_{e \in \delta(S)} x_e = \sum_{i \in S} \sum_{j \in N(i)} x_{ij} \leq |S|,
\end{equation*}
where the first inequality follows from \eqref{eq:matching_nonneg} and the second from \eqref{eq:matching_vertex}.  So we have
\begin{align*}
\sum_{e \in E(S)} x_e \leq \frac{|S|}{2}, \text{for all } S \subseteq V;
\end{align*}
however, to satisfy \eqref{eq:matching_set}, we need a RHS of $\frac{|S|-1}{2}$ instead of $\frac{|S|}{2}$ for sets $S$ of odd size.  Since the $|S|=3$ case is satisfied by assumption, $(\alpha x)_{e \in E}$ is feasible for LP \eqref{eq:matching_lp}, where
\begin{equation*}
\alpha := \min_{\{s \in \mathbb{Z} \mid s \text{ odd}, s > 3\}} \frac{s-1}{s} = \frac{4}{5}.
\end{equation*}        
\end{proof}

In order to appeal to the above lemma, we will need to show that the energy values arising from the level-2 SDP satisfy the constraints in the hypothesis of the lemma. For this we will rely on monogamy of entanglement on a star and triangle as established in Lemma~\ref{lem:star_monogamy} and the following corollary of Lemma~\ref{lem:triangle_monogamy}, respectively.  

\begin{corollary} \label{cor:triangle_bound}
  Given a feasible solution to the level-2 Lasserre SDP on a graph $G=(V,E)$, for $i,j,k \in V$,
  \begin{align}
    h^+_{ij} + h^+_{jk} + h^+_{ik} \leq \frac{1}{2}.  \label{eq:hij_triangle}
  \end{align}  
\end{corollary}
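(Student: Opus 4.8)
The plan is to show that this inequality follows purely from the \emph{linear} monogamy relations, so that the quadratic convexgamy bound \eqref{eq:triangle_quadratic_monogamy} of Lemma~\ref{lem:triangle_monogamy} is not actually needed here. Writing $h^+_e = (g_e - \tfrac12)^+$, I would first collect three linear facts about a feasible level-2 solution restricted to the three vertices $i,j,k$: (i) each $g_e \ge 0$, which is exactly the nonnegativity used in the proof of the convexgamy lemma; (ii) each $g_e \le 1$, obtained from Lemma~\ref{lem:star_monogamy} applied with $S$ a single neighbor, i.e. $h_e = g_e - \tfrac12 \le \tfrac12$; and (iii) the linear triangle bound $g_{ij}+g_{jk}+g_{ik} \le \tfrac32$ from \eqref{eq:triangle_linear_monogamy}. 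The quadratic relation would be set aside entirely.

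The argument is then a short case analysis on the number of edges among $(i,j),(j,k),(i,k)$ whose value $g_e$ strictly exceeds $\tfrac12$, since only those contribute to the left-hand side of \eqref{eq:hij_triangle}. First, no more than two of the three values can exceed $\tfrac12$: if all three did, their sum would strictly exceed $\tfrac32$, contradicting (iii). If none exceeds $\tfrac12$ the left-hand side is $0$. If exactly one does, say $g_{ij}$, then the left-hand side equals $g_{ij}-\tfrac12 \le \tfrac12$ directly from the upper bound (ii).

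The only case that genuinely combines (i) and (iii) is when exactly two of the values exceed $\tfrac12$, say $g_{ij}$ and $g_{jk}$, with $g_{ik} \le \tfrac12$. Here the left-hand side is $(g_{ij}-\tfrac12)+(g_{jk}-\tfrac12) = g_{ij}+g_{jk}-1$, and bounding $g_{ij}+g_{jk} \le \tfrac32 - g_{ik} \le \tfrac32$ via (iii) together with the nonnegativity $g_{ik}\ge 0$ from (i) yields exactly $\tfrac12$. This two-edge case is the main (and essentially the only) obstacle: it is precisely where nonnegativity of the ``inactive'' third edge is indispensable, since without $g_{ik}\ge 0$ the linear triangle bound alone would not force $g_{ij}+g_{jk}\le\tfrac32$. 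As a sanity check I would verify that the configuration $g_{ij}=g_{jk}=\tfrac34$, $g_{ik}=0$ saturates the bound at $\tfrac12$, confirming tightness and corroborating that no sharper estimate — and hence no appeal to the quadratic convexgamy relation — is needed for \eqref{eq:hij_triangle}.
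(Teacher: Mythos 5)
Your proof is correct and follows essentially the same route as the paper's: a case analysis on how many of the three pairs have $h^+_{uv}>0$, handling the zero/one-edge cases via the individual bound $h^+_{uv}\le\tfrac12$ and the two-edge (and three-edge) case via the linear triangle bound \eqref{eq:triangle_linear_monogamy}. If anything you are slightly more careful than the paper, which in the $t\ge 2$ case silently relies on the individual nonnegativity $g_{ik}\ge 0$ (not just the nonnegativity of the sum stated in \eqref{eq:triangle_linear_monogamy}) that you correctly isolate as the indispensable ingredient.
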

\begin{proof}
Let t be the number of $(u,v) \in \{(i,j),(j,k),(i,k)\}$ with $h^+_{uv} > 0$.  If $t \leq 1$ then \eqref{eq:hij_triangle} holds since $h^+_{ij} \leq \frac{1}{2}$ for all $i,j \in V$.  If $t \geq 2$ then \eqref{eq:hij_triangle} holds by \eqref{eq:triangle_linear_monogamy}.
\end{proof}

We are now in position to prove our main result.

\begin{theorem}[main]
Algorithm \ref{alg} gives a 0.595-approximation for any weighted input graph $G=(V,E,w)$.
\end{theorem}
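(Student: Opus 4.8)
The plan is to bound the energies of $\rho_1$ and $\rho_2$ separately against the SDP optimum $\nu = \sum_{(i,j)\in E} w_{ij} g_{ij}$, which upper bounds the true QMC optimum, and then combine them. Since both bounds will be expressed edge-by-edge as functions of $g_{ij}$, the whole argument collapses to a single-variable inequality on the feasible range $g\in[0,1]$. Concretely, I would first establish $\mathbb{E}[E(\rho_1)] \ge \sum_{e} w_e\,\phi(g_e)$ and $E(\rho_2) \ge \sum_{e} w_e\,\psi(g_e)$ for two explicit curves $\phi,\psi$, and then exploit that the reported energy $\max(E(\rho_1),E(\rho_2))$ dominates every convex combination $p\,\mathbb{E}[E(\rho_1)] + (1-p)E(\rho_2)$ with $p\in[0,1]$.

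First I would handle the matching state. A direct computation gives $E(\rho_2) = \tfrac14 W + \tfrac34\,\mathrm{MWM}$, where $W=\sum_{e\in E} w_e$ and $\mathrm{MWM}$ is the value of a maximum weight matching: a singlet earns full weight on a matched edge, while every other edge sees at least one maximally mixed qubit and so contributes $w_e/4$. To lower bound $\mathrm{MWM}$ I would plug $x_e := 2h^+_e$ into Lemma~\ref{lem:matching_approx}. Nonnegativity \eqref{eq:matching_nonneg} is immediate; the vertex constraints \eqref{eq:matching_vertex} hold because star monogamy (Lemma~\ref{lem:star_monogamy}) gives $\sum_{j\in N(i)} 2h^+_{ij}\le 1$; and the $|S|=3$ case of \eqref{eq:matching_set} holds because the triangle bound (Corollary~\ref{cor:triangle_bound}) gives $2(h^+_{ij}+h^+_{jk}+h^+_{ik})\le 1$. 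Hence $(\tfrac45\cdot 2h^+_e)_e$ is feasible for LP~\eqref{eq:matching_lp}, so $\mathrm{MWM}\ge \tfrac85\sum_e w_e h^+_e$, and therefore $E(\rho_2)\ge \sum_e w_e\,\psi(g_e)$ with $\psi(g) := \tfrac14 + \tfrac65\,(g-\tfrac12)^+$.

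Next I would handle the product state. Using the SDP inner-product identities, each single-qubit vector $v_i$ is a unit vector with $v_i\cdot v_j = (1-4g_{ij})/3$, so by rotational invariance the Gharibian-Parekh projection rounding produces Bloch vectors $u_i$ whose expected energy on edge $e$ is $\tfrac14\bigl(1-\mathbb{E}[u_i\cdot u_j]\bigr)=:\phi(g_e)$, an explicit (transcendental) function of $g_e$ determined by the projection-rounding curve. Summing over edges gives $\mathbb{E}[E(\rho_1)] = \sum_e w_e\,\phi(g_e)$.

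Finally I would combine the two. For any fixed $p\in[0,1]$, $\mathbb{E}[\max(E(\rho_1),E(\rho_2))] \ge p\,\mathbb{E}[E(\rho_1)] + (1-p)E(\rho_2) \ge \sum_e w_e\bigl[p\phi(g_e)+(1-p)\psi(g_e)\bigr]$; since $w_e\ge 0$ it then suffices to choose $p$ and prove the pointwise inequality $p\phi(g)+(1-p)\psi(g)\ge 0.595\,g$ for all $g\in[0,1]$, after which summing yields $\mathbb{E}[\text{output}]\ge 0.595\,\nu\ge 0.595\,\mathrm{OPT}$, and a standard conditional-expectation derandomization turns this into a deterministic guarantee. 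I expect this last verification to be the main obstacle: $\phi$ is transcendental, and the competing curve $\psi$ has a kink at $g=\tfrac12$ (where the matching stops contributing), so the binding case sits near $g=\tfrac12$ rather than at an endpoint. I would pin down the near-optimal weight $p$ (balancing the constraint at $g=\tfrac12$ against the moderate-$g$ range) and then certify the one-variable inequality rigorously—via monotonicity and convexity of $\phi$ together with explicit analytic bounds on the rounding curve (or rigorous interval arithmetic)—rather than by numerical sampling alone.
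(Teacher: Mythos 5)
Your proposal is correct and follows essentially the same route as the paper: the same per-edge bound $\Tr\rho_2 H_e = \tfrac14 + \tfrac34 M_e$ combined with feeding $2h^+_e$ into Lemma~\ref{lem:matching_approx} via star and triangle monogamy, the same Gharibian--Parekh curve for $\rho_1$, and the same reduction of $\max(E(\rho_1),E(\rho_2))$ to a one-variable min-max (your parametrization by $g\in[0,1]$ is just the paper's $x=v_i\cdot v_j\in[-1,1/3]$ under the change of variables $g=(1-3x)/4$). The only divergence is presentational: you flag derandomization and a rigorous certification of the final numeric inequality, which the paper leaves implicit.
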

\begin{proof}
Define
\begin{align*}
    &H_{ij} = (I - X_i X_j - Y_i Y_j - Z_i Z_j)/4, \\
    &H = \sum_{(i,j) \in E} w_{ij}H_{ij}. 
\end{align*}
We bound the expected energy of each case of Steps \ref{item:product_state} and \ref{item:matching} of the algorithm.
The subroutine of Step \ref{item:product_state} is directly from the main algorithm of \cite{GP19}, where in turn the analysis is borrowed from \cite{BOV14}.
More precisely, the energy of $\rho_1$ with respect to $H_{ij}$ is
\begin{align*}
    \Tr \rho_1 H_{ij}&= \frac{1}{16}\Tr [(I-X\otimes X - Y \otimes Y - Z \otimes Z) \\
    &((I + u_{i,1}X + u_{i,3}Y + u_{i,3}Z)\otimes (I + u_{j,1}X + u_{j,3}Y + u_{j,3}Z))] \\
    &=\frac{1}{4}(1- u_i\cdot u_j),
\end{align*}
and its expected value is 
\begin{align}
    \Tr \rho_1 H_{ij} &= \frac{1- f_3(v_i\cdot v_j)}{4} \label{eq:energy_prod}
\end{align}
where
\begin{align*}
    f_3(x) = \frac{2}{3}\left( \frac{\Gamma(2)}{\Gamma(1.5)}\right)^2 x _2F_1\left(\substack{1/2, 1/2 \\ \\2}; x^2  \right).
\end{align*}
The estimation is from Lemma 2.1 of \cite{BOV14}.

Now we turn to the analysis of the energy of $\rho_2$ from Step \ref{item:matching}. 
Note that if an edge $e$ is matched, then $\Tr \rho_2 H_{e}  = 1$, and if $e$ is not matched, then $\Tr \rho_2 H_{e} =1/4.$
More succinctly, $\Tr \rho_2 H_e = 1/4 + 3 M_e/4$.  Since the SDP values, $(h^+_e)_{e \in E}$ satisfy monogamy of entanglement on a star and triangle (Equations~\eqref{eq:hij_monogamy} and \eqref{eq:hij_triangle}), we have:
\begin{gather*}
 \sum_{j \in N(i)} 2h^+_{ij} \le 1, \text{for all } i \in V,\\
 2h^+_{ij} + 2h^+_{jk} + 2h^+_{ik} \leq 1, \text{for all } i,j,k \in V,
\end{gather*}
where the former correspond to the constraints~\eqref{eq:matching_vertex} of the LP~\eqref{eq:matching_lp}, and the latter correspond to the constraints~\eqref{eq:matching_set} with $|S|=3$.  Then by Lemma \ref{lem:matching_approx}, $(\frac{8}{5} h^+_e)_{e \in E}$ is feasible for the LP.  This implies that the optimal solution of the LP, namely a maximum weight matching, has weight at least that of $(\frac{8}{5} h^+_e)_{e \in E}$:
\begin{equation*}   
\sum_{e \in E} w_e M_e \ge \frac{8}{5} \sum_{e \in E} w_{e} h_{e}^+.
\end{equation*}
Therefore,
\begin{equation}
\sum_{e \in E} w_{e} \Tr \rho_2 H_{e} =  \sum_{e \in E} w_{e}\left(\frac{1}{4} + 
\frac{3}{4}M_e\right) \ge \sum_{e \in E} w_{e}\left(\frac{1}{4} +\frac{6}{5} h_e^+\right). \label{eq:energy_matching}
\end{equation}

By definition, $v(I)\cdot(v(I) - v(X_i X_j) - v(Y_i Y_j) -v(Z_i Z_j))/4 = (1 -3 v_i\cdot v_j)/4 = 1/2 + h_{ij}.$
So we have 
\begin{align}
    h_{ij} = -\frac{1+3v_i\cdot v_j}{4}.   \label{eq:hij}
\end{align} 
Let $\sigma$ be the density matrix of the output state of the algorithm.
By combining the energy estimation of the two cases \eqref{eq:energy_prod} and \eqref{eq:energy_matching}, we get,
\begin{align*}
    \sum_{(i,j) \in E} w_{ij} \Tr \sigma H_{ij} &=
    \max \left\{\sum_{(i,j)\in E} w_{ij} \rho_1 H_{ij}, \sum_{(i,j)\in E} w_{ij} \rho_2 H_{ij}\right\}\\
    &\ge \sum_{(i,j)\in E} w_{ij} \left[ p\frac{1 - f_3(v_i\cdot v_j)}{4} + (1-p) \left(\frac{1}{4} + \frac{6}{5}\left(-\frac{1 + 3 v_i\cdot v_j}{4}\right)^+\right)\right],
\end{align*}
for any $p \in [0,1]$. Since $-1\le v_i\cdot v_j \le 1/3$, it suffices to find
\begin{align*}
     \alpha &:= \max_{p\in [0,1]} \min_{x \in [-1, 1/3]} \left[ p\frac{1 - f_3(x)}{4} + (1-p) \left(\frac{1}{4} + \frac{6}{5}\left(-\frac{1 + 3x}{4}\right)^+ \right)  \right]  \left. \middle/~ {\frac{ 1- 3x}{4}} \right. \\
     &= 0.595,
\end{align*}
where the maximum occurs at $p = 0.674$. This proves the theorem since
\begin{align*}
    \sum_{(i,j) \in E} w_{ij} \Tr \sigma H_{ij}   &\ge \sum_{(i,j)\in E} w_{ij} \left[ p_0\frac{1 - f_3(v_i\cdot v_j)}{4} + (1-p_0) \left(\frac{1}{4} + \frac{6}{5}\left(-\frac{1 + 3 v_i\cdot v_j}{4}\right)^+\right)\right] \\
     &\ge 0.595\sum_{(i,j)\in E} w_{ij} \frac{ 1-3v_i \cdot v_j}{4} \ge 0.595\lambda_{max} (H).
\end{align*}
\end{proof}

\begin{remark}
The present approach can be likely be improved by deriving analogues of Corollary~\ref{cor:triangle_bound} for larger odd-sized sets of qubits.  This would enable a stronger version of Lemma~\ref{lem:matching_approx} with $\alpha > \frac{4}{5}$.  However, the best approximation ratio achievable by such approaches, corresponding to $\alpha = 1$, is $0.606$. This is also the approximation ratio of our algorithm on bipartite graphs, since in this case LP~\eqref{eq:matching_lp} gives the value of a maximum weight matching even when constraints~\eqref{eq:matching_set} are absent from the LP.
\end{remark}

\bigskip
\section*{Open problems}
Understanding the approximability of QMC is likely to bring a deeper understanding of the more general local Hamiltonian problem, just as resolving the approximability of Max Cut (up to the Unique Games Conjecture) has had surprising consequences for the theory of classical constraint satisfaction problems.  We list relevant research directions below.
\begin{itemize}
    \item Find a rigorous quantum approximation algorithm for QMC.
    \item Find a heuristic quantum algorithm (e.g.~VQE-based) for QMC that outperforms rigorous classical algorithms.
    \item The approximability of QMC using product states (i.e.~tensor products of 1-qubit states) is well understood~\cite{PT22,HNPTW21}. Find the best approximation ratio achievable using a tensor product of 1- and 2-qubit states. Can this be obtained using matchings? Which level of the lassere hierarchy is necessary to achieve this? 
    \item Find an entanglement convexgamy relation (i.e.~tighter non-linear inequalities on star graphs) from a valid level-$k$ SDP solution on $d$ edges. Does the optimal such relation (i.e.~one precisely describing the feasible space) arise at a constant level $k$ (with respect to $d$)?
\end{itemize}

\bigskip
\section*{Acknowledgements}
Eunou Lee appreciates helpful comments on the write-up by Andrus Giraldo.  
Eunou Lee was supported by Individual Grant No.CG093801 at Korea Institute for Advanced Study. 

This work is supported by a collaboration between the US DOE and other Agencies.
Ojas Parekh was supported by the U.S. Department of Energy (DOE), Office of Science, National Quantum Information Science Research Centers, Quantum Systems Accelerator.  Additional support is acknowledged from DOE, Office of Science, Office of Advanced Scientific Computing Research, Accelerated Research in Quantum Computing, Fundamental Algorithmic Research in Quantum Computing.

This article has been authored by an employee of National Technology \& Engineering Solutions of Sandia, LLC under Contract No. DE-NA0003525 with the U.S. Department of Energy (DOE). The employee owns all right, title and interest in and to the article and is solely responsible for its contents. The United States Government retains and the publisher, by accepting the article for publication, acknowledges that the United States Government retains a non-exclusive, paid-up, irrevocable, world-wide license to publish or reproduce the published form of this article or allow others to do so, for United States Government purposes. The DOE will provide public access to these results of federally sponsored research in accordance with the DOE Public Access Plan \url{https://www.energy.gov/downloads/doe-public-access-plan}.

\bibliography{ref}

\end{document}